\def\@@bfil{\leaders \vrule \@height \ht\z@ \@depth \z@ \hfill}
\def\@bLfil{\@@bfil}
\def\@bRfil{\@@bfil}
\def\resetbraceratio{\gdef\@bLfil{\@@bfil}\gdef\@bRfil{\@@bfil}}
\def\setbraceratio#1#2{
  \let\@bLfil\relax
  \multido{\iA=1+1}{#1}{\gappto\@bLfil{\@@bfil}}
  \let\@bRfil\relax
  \multido{\iA=1+1}{#2}{\gappto\@bRfil{\@@bfil}}
}
\def\upbracefill{$\m@th\setbox\z@\hbox{$\braceld$}\bracelu\@bLfil\bracerd\braceld\@bRfil\braceru$}
\def\downbracefill{$\m@th\setbox\z@\hbox{$\braceld$}\braceld\@bLfil\braceru\bracelu\@bRfil\bracerd$}
\def\be{\begin{equation}}
\def\ee{\end{equation}}
\def\barr{\begin{array}}
\def\earr{\end{array}}
\newtheorem{prop}{Proposition}
\def\1{\tilde{1}}
\def\2{\tilde{2}}
\def\3{\tilde{3}}
\newtheorem{lemma}{Lemma}[section]
\newcommand{\ba}{\begin{equation}\begin{aligned}}
\newcommand{\ea}{\end{aligned}\end{equation}}
\newcommand{\bml}{\begin{multline}}
\newcommand{\eml}{\end{multline}}
\newcommand{\CC}{\mathbb{C}}
\newcommand{\RR}{\mathbb{R}}
\newcommand{\ZZ}{\mathbb{Z}}
\newcommand{\PP}{\mathbb{P}}
\newcommand{\dd}{\mathrm{d}}
\newcommand{\pd}{\partial}
\newcommand{\Res}{\mathrm{Res}}
\newcommand{\MM}{\mathscr{M}}
\newcommand{\aA}{\mathfrak{A}}
\begin{document}

\title{Special geometry on the 101 dimesional moduli space of the quintic threefold.}

\author{Konstantin Aleshkin $^{1,2}$,}
\author{Alexander Belavin $^{1,3}$ }

\affil{$^1$ L.D. Landau Institute for Theoretical Physics\\
 Akademika Semenova av. 1-A\\ Chernogolovka, 142432  Moscow region, Russia}

\affil{$^2$ International School of Advanced Studies (SISSA),
 via Bonomea 265, 34136 Trieste, Italy}
\affil{$^3$ Moscow Institute of Physics and Technology\\
Dolgoprudnyi, 141700 Moscow region, Russia}

\maketitle
\abstract{ A new method for explicit computation of the CY moduli space metric was
proposed by the authors recently. The method makes use of the connection of the moduli space with
a certain Frobenius algebra. Here we clarify this approach and demonstrate its efficiency by computing 
the Special geometry  of the 101-dimensional moduli space of the quintic threefold around the orbifold point.}

\flushbottom




\section{Introduction.}
 When compactifying the IIB superstring theory on a Calabi--Yau (CY) threefold $X,$ one can
write the low-energy effective theory in terms of the geometry of the CY moduli space~\cite{CHSW}.
More precisely, the effective Lagrangian of the vector multiplets in the superspace contains
 $h^{2,1}$ supermultiplets. Scalars from these multiplets take value in the
 target space $\MM$, which is a moduli space
 of complex structures on a CY manifold and is a special K\"ahler manifold itself~\cite{Rolling, 
CO, S}.
 Metric $G_{a\bar{b}}$ and Yukawa couplings $\kappa_{abc}$ 
on this space are given by the following formulae:
\ba \label{koo}
&G_{a\bar{b}} = \pd_a\pd_{\bar{b}} K, \;\;\;
 e^{-K} = -i \int_X \Omega\wedge\bar{\Omega},\\
&\kappa_{abc} = \int_X \Omega\wedge\pd_a\pd_b\pd_c\Omega = \frac{\pd^3 F}{\pd z^a \pd z^b \pd z^c},
\ea
where
\be
z^a = \int_{A_a} \Omega, \; \frac{\pd F}{\pd z^a} = \int_{B^a} \Omega
\ee
are the period integrals of the holomorphic volume form $\Omega$ on $X$.
Here $A_a$ and $B^a$ form the symplectic basis in $H_3(X,\ZZ)$.
We can rewrite the expression~\eqref{koo} for the K\"ahler potential using the periods as
\be \label{ksympl}
e^{-K} = -i \Pi \Sigma \Pi^{\dagger}, \; \Pi = (\pd F, \; z), 
\ee
where matrix $(\Sigma)^{-1}$ is an intersection matrix of cycles $A_a, \; B^a$ equal to the
symplectic unit.
In practice, computation of periods in the symplectic basis is a very complicated problem
and was done explicitly only in few examples~\cite{COGP, Klemm, CFKM, COFKM}. It is due to the fact,
that it requires a case by case analysis and geometric description of the symplectic basis of cycles.
Recently we proposed a method~\cite{AKBA, AKBA2} to easily compute the K\"ahler metric (and the symplectic basis)
for a large class of CY manifolds which can be represented by specific hypersurfaces in
weighted projective spaces~\cite{BerHub}. Our method does not require the knowledge of symplectic cycles, but
instead uses a structure of a Frobenius algebra  associated with a CY of this 
class and its Hodge structure.~\footnote{Actually, moduli space of a CY manifold is closely related with
a Frobenius manifold~\cite{Dub}, and the Frobenius algebra, we use, is a tangent space to this manifold
at one point. }

Namely, let a CY manifold $X$ be given as a solution of an equation 
\be
W(x, \phi) = W_0(x) + \sum_{s=1}^{h^{2,1}} \phi_s e_s(x)  = 0
\ee
in  some weighted projective space, where $W_0(x)$ is a quasihomogeneous function in $\CC^5$
 of weight $d$ that defines an isolated singularity at $x=0,$(see~\cite{AVG}) which is tightly related with
the underlying $N=2$ superconformal theory~\cite{LVW, Mart, Gep}. 
The monomials $e_s(x) $ also have  weight $d$ and correspond to deformations of
 the complex structure of  $X$.

Polynomial $W_0(x)$ defines a Milnor ring $R_0$. Inside  $R_0$  there exists a subring $R^Q_0$  which
 is invariant w.r.t the action of the so-called quantum symmetry group $Q$.
This group acts on $\CC^5$ diagonally, and preserves $W(x, \phi)$. In many cases 
$\dim R_0^Q = \dim H^3(X)$ and the ring itself has a Hodge structure $R_0^Q = (R_0^Q)^0 
\oplus (R_0^Q)^1 \oplus (R_0^Q)^2 \oplus (R_0^Q)^3$ in correspondence with degrees of the elements 
$0,d,2d,3d$. One can introduce an invariant pairing  $\eta$ on $R_0^Q$. The pairing turns the ring to a 
Frobenius algebra \cite{Dub} and plays an important role in the construction of our formula for $e^{-K}$.

Using the invariant ring $R_0^Q$ and differentials $D_{\pm} = \dd \pm \dd W_0\wedge$ we construct
two groups of $Q-$invariant cohomology $H^5_{D_{\pm}}(\CC^5)_{inv}$. 
These groups inherit the Hodge
structure from $R_0^Q$. If we denote by $\{e_{\mu}(x) \}$ some basis of $R_0^Q,$ then
$\{e_{\mu}(x) \, \dd^5 x\}$ will be a basis of $H^5_{D_{\pm}}(\CC^5)_{inv}$.
As shown by Candelas~\cite{Candelas}, elements of these cohomology groups are in correspondence
 with harmonic
forms of $H^3(X)$. This isomorphism sends components $H^{3-q,q}(X)$ to the Hodge
decomposition components of $H^5_{\pm}(\CC^5)_{inv}$ spanned by $e_{\mu}(x) \, \dd^5 x$ with
$e_{\mu}(x) \in (R_0^Q)^{q}$ and sends the pairing on the differential forms on $X$
to the invariant Frobenius algebra pairing $\eta$. Also the same isomorphism allows to define a complex conjugation 
(we denote this operation $*$) on the invariant cohomology. 

It turns out, that in the basis $\{e_{\mu}(x)\}$ the operation $*$ reads
\be \label{conj}
*e_{\mu}(x)\,\dd^5 x =  M_{\mu}^{\nu} e_{\nu} (x) \, \dd^5 x 
\ee
If we pick a basis $\{ e_{\mu}(x)\}$ such, that the Frobenius pairing $\eta$ is antidiagonal,
the matrix $M$ is antidiagonal as well:
\be
 *e_{\mu}(x)\,\dd^5 x = A^{\mu} \, e_{\mu'}(x)\, \dd^5 x,
\ee
where $e_{\mu'}(x)$ is the unique element of the basis such, that 
$\eta(e_{\mu}(x), e_{\mu'}(x)) = 1,$ that is $e_{\mu}(x) \cdot e_{\mu'}(x) = 
e_{\rho}(x),$ which is a unique up to a constant element of degree 3d. 
Coefficients $A_{\mu}$ are the coefficients of the matrix $M$ in this basis. In particular, we have
a useful relation: $A_{\mu} \overline{A_{\mu'}}=1,$ because $M$ is an 
anti-involution.
We compute $A_{\mu}$ in the section~\ref{sec:comp} for the quintic threefold, see also section~\ref{sec:real}.

Having $H^5_{D_{\pm}}(\CC^5)_{inv}$ we define the relative invariant homology groups
$\mathscr{H}_{5}^{\pm,inv} := H_5(\CC^5, W_0 = L,\;\mathrm{Re}L \to \pm \infty)_{inv}$ inside a relative homology group
$H_5(\CC^5, W_0 = L,\;\mathrm{Re}L \to \pm \infty)$. For this purpose we use oscillatory integrals. Using 
the oscillatory integral pairing we define a cycle 
$\Gamma^{\pm}_{\mu}$  in the basis of relative invariant homology 
to be  dual to  $e_{\mu}(x) \, \dd^5 x$.

At last we define periods $\sigma^{\pm}_{\mu}(\phi)$ to be oscillatory integrals over the basis 
of cycles $\Gamma^{\pm}_{\mu}$, which can be effectively computed using the techinque described 
in~\cite{BB, AKBA, AKBA2} that we remind in the section~\ref{sec:osc}.
 The periods $\sigma^{\pm}_{\mu}(\phi)$ are equal to periods of the holomorphic volume
 form $\Omega$
on $X$ in a special basis of cycles $H_3(X, \CC)$ with complex coefficients. 

As shown in~\cite{AKBA}, K\"ahler potential for the metric is given by the followng formula 
\be \label{intkah}
e^{-K(\phi)} = \sum_{\mu,\nu, \lambda}  \sigma^+_{\mu}(\phi) \eta^{\mu\lambda}
 M^{\nu}_{\lambda} \overline{\sigma^-_{\nu}(\phi)},
\ee
where the real structure matrix $M$ is the same as the one in~\eqref{conj}.
Matrix $M$ can be also represented as $M = T^{-1} \bar{T},$ where $T$ is a transition
 matrix from periods
in arbitrary real basis of cycles $Q^{\pm}_{\mu}$ to periods $\sigma^{\pm}_{\mu}(\phi)$.
In our basis matrix $\eta$ is antidiagonal, and it follows, that
\be \label{intdiag}
e^{-K(\phi)} = \sum_{\mu}  (-1)^{|\nu|}\sigma^+_{\mu}(\phi) A^{\mu}
 \overline{\sigma^-_{\mu}(\phi)}.
\ee
Using this we are able to explicitly compute the diagonal matrix elements  $A^{\mu}$ and to obtain the explicit expression  for the whole $e^{-K}$.

In~\cite{AKBA, AKBA2}, to find the real structure,
 we used the knowledge of periods in some integral basis of homology cycles 
(e.g. from~\cite{BCOFHJQ}). However this basis is not
always known.\\
 In this paper we propose another method to compute the real structure matrix $M$ and apply it to
 the $101$-dimensional moduli space of the quintic threefold complex structures 
around the orbifold point to get an explicit exact result for the moduli space K\"ahler metric. Together with the knowledge of  the geometry of the $1$-dimensional
 moduli space of   the quintic  K\"ahler structures computed via the mirror symmetry in~\cite{COGP} 
it presumably gives the geometry of
the full moduli space of Calabi-Yau quintic threefold. 

In what follows we apply our method for the quintic threefold, where the huge symmetry group
$S_5 \ltimes (\ZZ_5)^5$ simplifies the computations.
$(\ZZ_5)^5$ is called a group of phase symmetries, it acts diagonally on $\CC^5$ and preserves
 $W_0(x)$. It acts naturally on the invariant ring $R_0^Q$, and this action respects the Hodge
 decomposition of  $R_0^Q$. 
This allows to pick a basis $e_{\mu}(x)$ in each of the Hodge decomposition components
 of $R_0^Q$, which consists of eigenvectors of the phase symmetry group action, which simplifies
our computations. The $S_5,$ which acts by permutations of $x_i$ among themselves, allows to
reduce the amount of computations even further.

If we consider other hypersurfaces in weighted projective spaces, they have
 less symmetry then the quintic threefold.
 However, most of the considerations are true and allow to perform explicit computations
in the more general case, as we briefly discuss in the 
conclusion~\ref{sec:concl}.

\section{Hodge structure on the middle cohomology of the quintic}

First of all we notice, that the formula~\eqref{ksympl} may be written in the arbitrary basis of
cycles $q_{\mu} \in H_3(X, \ZZ)$:
\ba
e^{-K(\phi)} &= \omega_{\mu}(\phi) C^{\mu\nu} \overline{\omega_{\nu}(\phi)}, \\
 \omega_{\mu}(\phi) &= \int_{q_{\mu}} \Omega
\ea
and $(C^{-1})_{\mu\nu} = q_{\mu}
\cap q_{\nu}$.\\

Now let us specialize to the case where $X$ is a quintic threefold:
\be
X = \{(x_1: \cdots : x_5) \in \PP^4 \; | \; W(x, \phi) = 0 \},
\ee
where
\be \label{W}
W(x, \phi) = W_0(x) + \sum_{t=0}^{100} \phi_t e_t(x), \; W_0(x) = x_1^5+x_2^5+x_3^5+x_4^5+x_5^5  
\ee
and $e_t(x)$ are the degree 5 monomials such that each variable has the power that is a
 non-negative integer  less then four.
Let us denote monomials $e_t(x) = x_1^{t_1} x_2^{t_2}x_3^{t_3} x_4^{t_4}x_5^{t_5}$ by its degree
vector $t = (t_1, \cdots,  t_5)$. Then there are precisely 101 of such monomials, which can be 
divided into $5$ sets in respect to the permutation group $S_5$: 
$(1,1,1,1,1),$ $(2,1,1,1,0),$ $(2,2,1,0,0),$ $(3,1,1,0,0),$ $(3,2,0,0,0)$. 
In these groups there are correspondingly 1, 20, 30, 30, 20 different
monomials.
We denote $e_0(x) := e_{(1,1,1,1,1)}(x) = x_1x_2x_3x_4x_5$ to be the so-called
 fundamental monomial, which will be somewhat distinguished in our picture. \\

For this CY $\dim H_3(X) = 204$ and period integrals have the form
\be
\omega_{\mu}(x) = \int_{q_{\mu}} \frac{x_5 \, \dd x_1\dd x_2\dd x_3}{\pd W(x, \phi)/\pd x_4} = 
\int_{Q_{\mu}} \frac{\dd x_1 \cdots \dd x_5}{W(x, \phi)},
\ee
where $q_{\mu}\in H_3(X,\ZZ)$  and $Q_{\mu} \in H_5(\CC^5 \backslash (W(x, \phi) = 0), \ZZ)$ 
are the corresponding cycles. 
Cohomology groups of a K\"ahler manifold possess a Hodge structure $H^3(X) = H^{3,0}(X)\oplus
H^{2,1}(X)
\oplus H^{1,2}(X)\oplus H^{0,3}(X)$.
Period integrals measure variation of the Hodge structure on $H^3(X)$ as the complex structure
 on $X$ varies with $\phi$.
 This Hodge structure variation is equivalent to the one on a certain ring which we
will now describe.\\

\section{Hodge structure on the invariant Milnor ring.}

We can consider $W_0(x)$ as a singularity in $\CC^5$. Then there is an associated Milnor (also Jacobi)
 ring
\be
R_0 = \frac{\CC[x_1, \cdots, x_5]}{\langle\pd_i W \rangle}.
\ee
We will identify its elements with unique smallest degree polynomial representatives.
For the quintic threefold $X$ its Milnor ring $R_0$ is generated as a vector space
 by monomials where each variable 
has degree less than four, and $\dim R_0 = 1024$. Polynomial $W_0(x)$ is homogeneous and, in particular,
$W_0(\alpha  x_1, \ldots, \alpha x_5) = W_0(x_1, \ldots, x_5)$ for $\alpha^5 = 1$. This action preserves $W_0(x)$ and is trivial in the corresponding projective space and on $X$. Such a group with this action is called 
a~\textit{quantum symmetry} $Q$, in our case $Q \simeq \ZZ_5$. 
$Q$ obviously acts on the Milnor ring $R_0$. 

Now we define a subring $R_0^Q$ in the  Milnor ring $R_0$,
\be
R_0^Q := \{e_{\mu}(x) \in R_0 \; | \; e_{\mu}(\alpha x)  = e_{\mu}(x)\}, \; \alpha^5 = 1,
\ee
to be a $Q$-invariant part of the Milnor ring. 

It is multiplicatively
generated by 101 fifth-degree monomials $e_t(x)$ from~\eqref{W}. More precisely,
$R_0^Q$ consists of elements of degree $0,5,10$ and $15$, dimensions of the corresponding subspaces
are $1,101,101$ and $1$. This degree filtration defines a Hodge structure on $R^Q_0$. 
Basically  $R_0^Q$ is isomorphic to $H^3(X)$ and the isomorphism  sends the degree filtration to the Hodge filtration on $H^3(X)$ \cite{Candelas}. 
Let us denote $\chi^i_{\bar{j}} = g^{i\bar{k}} \, \chi_{\bar{k}\bar{j}}$ as an extrinsic curvature tensor for the hypersurface $W(x, \phi) = 0$ in $\PP^4$. 
Then the isomorphism above can be written as a map from $R_0^Q$ to closed differential
forms in $H^3(X)$:

\ba \label{chi}
&1 \to \Omega_{ijk} \in H^{3,0}(X), \\
&e_{\mu}(x) \to e_{\mu}(x(y)) \, \chi^l_{\bar{i}} \, \Omega_{ljk} \in H^{2,1}(X) \text{ if } |\mu| = 5, \\
&e_{\mu}(x) \to e_{\mu}(x(y)) \, \chi^l_{\bar{i}} \, \chi^m_{\bar{j}} \, \Omega_{lmk} \in H^{1,2}(X) \text{ if }
 |\mu| = 10, \\
&e_{\rho}(x) = x_1^3x_2^3x_3^3x_4^3x_5^3 \to \chi^l_{\bar{i}} \, \chi^m_{\bar{j}} \chi^p_{\bar{k}} \, \Omega_{lmp} = \kappa \bar{\Omega} \in H^{0,3}(X)
\ea
 The details on this map can be found in~\cite{Candelas, CanKal}. We also introduce the notation $e_{\mu}(x)$  for elements of the monomial basis of $R^Q_0$,
where  $\mu = (\mu_1, \cdots, \mu_5), \; \mu_i \in \ZZ_+^5 ,\; e_{\mu}(x) = \prod_i x_i^{\mu_i}$ and $|\mu| = \sum \mu_i$ is  the degree of $e_{\mu}(x)$. 
In particular, $\rho = (3,3,3,3,3),$ that is $e_{\rho}(x)$ is a unique degree 15 element of $R_0^Q$. 

There is a $\ZZ^5_5$ phase symmetry group acting diagonally on $\CC^5$:
$\alpha \cdot (x_1, \cdots, x_5) = (\alpha_1 x_1, \cdots, \alpha_5 x_5), \; \alpha_i^5=1$.
 This action preserves
$W_0 = \sum_i x_i^5$. The mentioned above quantum symmetry $Q$ is a diagonal
 subgroup of the phase symmetries. Basis $\{e_{\mu}(x)\}$ is an eigenbasis of the phase symmetry and
each $e_{\mu}(x)$ has a unique weight. Note that phase symmetry preserves 
the Hodge decomposition.

One additional important fact is  that on the invariant ring $R_0^Q$ there exists a natural invariant pairing  turning it into a Frobenius algebra
~\cite{Dub, AKBA}:

\be
\eta_{\mu\nu} = \Res \frac{e_{\mu}(x) \, e_{\nu}(x)}{\prod_i \pd_i W_0(x)}.
\ee
Up to an irrelevant constant for the monomial basis it is $\eta_{\mu\nu} = \delta_{\mu+\nu, \rho}$.
This pairing
plays a crucial role in our construction.\\

Let us  introduce a couple of differentials~\cite{Saito} on differential forms on $\CC^5: 
\; D_{\pm} = \dd \pm \dd W_0(x) \wedge$.
They define the cohomology groups $H^*_{D_{\pm}}(\CC^5)$. The cohomologies are only nontrivial
in the top dimension $H^5_{D_{\pm}}(\CC^5) \overset{J}{\simeq} R_0$. The isomorphism $J$ has an explicit description
\be
J(e_{\mu}(x)) = e_{\mu}(x) \, \dd^5 x, \; e_{\mu}(x) \in R_0.
\ee 
We see, that $Q = \ZZ_5$ naturally acts on $H^5_{D_{\pm}}(\CC^5)$ and $J$ sends
 the $Q$-invariant part $R_0^Q$ to $Q$-invariant subspace $H^5_{D_{\pm}}(\CC^5)_{inv}$. 
Therefore, the latter space obtains the Hodge structure  as well.
Actually, this Hodge structure  naturally corresponds  to the Hodge structure on $H^3(X)$.

 The complex conjugation  acts 
on $H^3(X)$ so that $\overline{H^{p,q}(X)} = H^{q,p}(X)$, in particular
 $\overline{H^{2,1}(X)} = H^{1,2}(X)$. Through the isomorphism  between
$R_0^Q$ and $H^3(X)$ the complex conjugation acts also on the elements of the
ring $R_0^Q\; $ as  $* e_{\mu}(x) = p_{\mu} e_{\rho - \mu}(x),$ where
$p_{\mu}$ is a constant to be determined. In particular, differential form built from $e_{\mu}(x) + p_{\mu} e_{\rho - \mu}(x)
\in H^3(X, \RR)$ is real and $p_{\mu} p_{\rho - \mu}=1$.

\section{Oscillatory representation and computation of $\sigma_{\mu}(\phi)$} \label{sec:osc}
Relative homology groups $H_5(\CC^5, W_0 = L,\;\mathrm{Re}L \to \pm \infty)$ have a natural 
pairing with $Q$-invariant cohomology groups $H^5_{D_{\pm}}(\CC^5)_{inv}$:
\be
\langle e_{\mu}(x)\dd^5 x, \Gamma^{\pm} \rangle = \int_{\Gamma^{\pm}}
 e_{\mu}(x) e^{\mp W_0(x)} \dd^5 x, \;
  H_5(\CC^5, W_0 = L,\;\mathrm{Re}L \to \pm \infty).
\ee
Using this we define two invariant homology 
groups~\footnote{We are grateful to V. Vasiliev for explaining to  us the details 
about these homology groups and their connection with the middle homology of $X$.} 
$\mathscr{H}_5^{\pm,inv}$ as
 quotient of $H_5(\CC^5, W_0 = L,\;\mathrm{Re}L \to \pm \infty)$ with respect to the subgroups orthogonal
to $H^5_{D_{\pm}}(\CC^5)_{inv}$. 
Now we introduce  basises $\Gamma^{\pm}_{\mu}$ in the homology groups  
$\mathscr{H}_5^{\pm,inv}$ using the duality with the basises 
in  $H^5_{D_{\pm}}(\CC^5)_{inv}$:
\be
\int_{\Gamma^{\pm}_{\mu}}
 e_{\nu}(x) e^{\mp W_0(x)} \dd^5 x = \delta_{\mu\nu}
\ee
and the corresponding periods
\ba \label{sig1}
&\sigma_{\alpha\mu}^{\pm}(\phi) := \int_{\Gamma^{\pm}_{\mu}}
 e_{\alpha}(x) e^{\mp W(x, \phi)} \dd^5 x, \\
 &\sigma_{\mu}^{\pm}(\phi) := \sigma_{0\mu}^{\pm}(\phi)
\ea
which are understood as series expansions in $\phi$ around zero.

 Periods 
$\sigma_{\mu}^{\pm}(\phi)$ satisfy the same differential equation as periods $\omega_{\mu}(\phi)$
of the holomorphic volume form on $X$. Moreover, these sets of periods span same subspaces as
functions of $\phi$. It follows, that we can define cycles $Q^{\pm}_{\mu} \in \mathscr{H}_5^{\pm,inv}$ such that
\be \label{qcyc}
\int_{Q^{\pm}_{\mu}}
  e^{\mp W(x, \phi)} \dd^5 x =  \int_{q_{\mu}}\Omega = \int_{Q_{\mu}}\frac{\dd^5 x}{W(x, \phi)}
\ee
and periods $\omega_{\alpha\mu}^{\pm}(\phi)$ are given by the integrals over these cycles analogous to~\eqref{sig1}.

With these notations the idea of computation of periods~\cite{BB}
\be \label{sigma1}
\sigma^{\pm}_{\mu}(\phi) = \int_{\Gamma^{\pm}_{\mu}} 
 e^{\mp W(x, \phi)} \, \dd^5 x
\ee
can be stated as follows. \\

 To explicitly compute $\sigma^{\pm}_{\mu}(\phi)$, first  we expand the exponent in the
 integral~\eqref{sigma1} in $\phi$ representing $W(x,\phi) = W_0(x) + \sum_s \phi_s e_s(x)$
\be \label{sigma2}
\sigma^{\pm}_{\mu}(\phi) = \sum_m  \left(\prod_s\frac{(\pm\phi_s)^{m_s}}{ m_s!}\right) \int_{\Gamma^{\pm}_{\mu}} 
 \prod_r e_{r}(x)^{m_r} \, e^{\mp W_0(x)} \, \dd^5 x,
\ee
where $m := \{m_s \}_{s}, \; m_s \ge 0$ denotes a multi-index of powers of $\psi_s$ in the expansion
above.
We note, that $\sigma^-_{\mu}(\phi) = (-1)^{|\mu|}\sigma^+_{\mu}(\phi),$
so we focus on $\sigma_{\mu}(\phi) := \sigma^{+}_{\mu}(\phi).$

For each of the summands in~\eqref{sigma2} the form
 $\prod_s e_{s}(x)^{m_s} \, \dd^5 x$ belongs to $H^5_{D_{\pm}}(\CC^5)_{inv},$
because it is $Q-$invariant.
Therefore, we can expand it in the
 basis $\{e_{\mu}(x) \, \dd^5 x\}_{\mu=1}^{\dim R_0^Q} \subset
H^5_{D_{\pm}}(\CC^5)_{inv}.$ Namely we always can find such a polynomial $4-$form $U,$ that
\be \label{sigma3}
 \prod_s e_{s}(x)^{m_s} \, \dd^5 x =
 \sum_{\nu} C_{\nu}(m) \, e_{\nu}(x) \, \dd^5 x + D_{+} U  ,
\ee
where $C_{\nu}(m)$ are uniquely determined as coefficients of the expansion of the lhs
in the basis $e_{\mu}(x) \, \dd^5 x$.
Therefore for the integral in~\eqref{sigma2} we obtain
\be \label{sigma4}
\int_{\Gamma^{\pm}_{\mu}} 
 \prod_s e_{s}(x)^{m_s} \, e^{\mp W_0(x)} \, \dd^5 x = C_{\mu}(m).
\ee

Writing~\eqref{sigma2} explicitly we have
\be \label{qsigma1}
\sigma_{\mu}(\phi) = \sum_m \left(\prod_s\frac{\phi_s^{m_s}}{ m_s!}\right) \int_{\Gamma^{+}_{\mu}} 
 \prod_{s,i}  x_i^{m_s s_i } \, e^{- W_0(x)} \, \dd^5 x.
\ee
Let $m_s s_i = 5 n_i + \nu_i, \; \nu_i < 5$. Therefore we want to expand 
\be
\prod_i x_i^{5 n_i + \nu_i} \, \dd^5 x =
 \sum_{\nu} C_{\nu}(m) \, e_{\nu}(x) \, \dd^5 x + D_{+} U.
\ee
Note that 
\begin{multline} \label{rec}
D_+ \left(\frac{1}{5}x_1^{5n + k-4} \, f(x_2, \cdots, x_5) \, \dd x_2 \wedge \cdots \wedge \dd x_5
\right) = \\ =
\left[x_1^{5n+k} + \left(n+\frac{k-4}{5}\right) x_1^{5(n-1) + k} \right]  \, f(x_2, \cdots, 
x_5) \, \dd^5 x
\end{multline}
Therefore in $D_+$ cohomology we have
\be \label{rec1}
\prod_i x_i^{5 n_i + \nu_i} \, \dd^5 x = 
-\left(n_1+\frac{\nu_1-4}{5}\right) x_1^{5(n_1-1) + \nu_1}
\prod_{i=2}^5 x_i^{5n_i+\nu_i} \, \dd^5 x, \; \nu_i < 5.
\ee
By induction we obtain
\be \label{rec1}
\prod_i x_i^{5 n_i + \nu_i} \, \dd^5 x = 
(-1)^{\sum_i n_i}\prod_i \left(\frac{\nu_i+1}{5}\right)_{n_i}
\prod_i x_i^{\nu_i} \, \dd^5 x, \; \nu_i < 5.
\ee
where $(a)_{n}=\Gamma(a+n)/\Gamma(a)$.

Using~\eqref{rec} once again, we see that if any $\nu_i = 4$ then the differential form is trivial and
 the integral is zero. Hence, rhs of~\eqref{rec1} is proportional to $e_{\nu}(x)$ and
 gives the desired 
expression. Plugging~\eqref{rec1} into~\eqref{qsigma1} and integrating over $\Gamma^+_{\mu}$ gives the answer
\be
\sigma_{\mu}(\phi)=\sigma_{\mu}^+(\phi) =
 \sum_{n_i\ge 0} \prod_i \left(\frac{\mu_i+1}{5}\right)_{n_i} 
\sum_{m \in \Sigma_n } \prod_s\frac{\phi_s^{m_s}}{ m_s!} ,
\ee
where 
\be
\Sigma_n = \{m\;|\;\sum_s m_s s_i = 5n_i+\mu_i\}
\ee
Further we will also use the periods with slightly different normalization, which turn
 out to be convenient
\be \label{sigmaquint}
\hat{\sigma}_{\mu}(\phi) = \prod_i \Gamma\left(\frac{\mu_i+1}{5}\right)\sigma_{\mu}(\phi)=
 \sum_{n_i\ge 0} \prod_i \Gamma\left(n_i+\frac{\mu_i+1}{5}\right) 
\sum_{m \in \Sigma_n } \prod_s\frac{\phi_s^{m_s}}{ m_s!}. 
\ee
\section{Computation of the K\"ahler potential} \label{sec:comp}
 Pick any basis $Q^{\pm}_{\mu}$ of cycles with integer or real coefficients as in~\eqref{qcyc}.
Then for the K\"ahler potential we have the formula
\be \label{eqK1}
e^{-K} = \omega^+_{\mu}(\phi) C^{\mu\nu} \overline{\omega^-_{\nu}(\phi)}
\ee
in which the matrix $C^{\mu\nu}$ is related with the  Frobenius pairing $\eta$ as
\be \label{eqeta}
\eta_{\alpha\beta} = \omega^+_{\alpha\mu}(0)C^{\mu\nu} \omega^-_{\beta\nu}(0).
\ee
The last expression is due to~\cite{CV, Chiodo}.
Let also $T^{\pm}$ be a coordinate change matrix
 $Q^{\pm}_{\mu} = (T^{\pm})^{\nu}_{\mu} \Gamma^{\pm}_{\nu}$. Then  $M = (T^-)^{-1}\overline{T^-}$ is a real structure matrix, that is $M\bar{M} = 1$ and by construction $M$ doesn't depend on the choice
of basis $Q^{\pm}_{\mu}.$ $M$ is only defined by our choice of $\Gamma^{\pm}_{\mu}$.

In~\cite{AKBA} we deduced from~\eqref{eqK1} and~\eqref{eqeta}  the formula
\be \label{eqK}
e^{-K(\phi)} = \sigma^+_{\mu}(\phi) \eta^{\mu\lambda} M^{\nu}_{\lambda} \overline{\sigma^-_{\nu}(\phi)} = \sigma_{\mu} A^{\mu\nu}
\overline{\sigma_{\nu}},
\ee
where $\eta^{\mu\nu} = \eta_{\mu\nu} = \delta_{\mu, \rho-\nu}$.
In that papers our method to compute the real structure matrix $M$ used the knowledge of the periods in some
basis $q_{\mu}$ computed using the residue formula and monodromy considerations. 
However, this method gives only 4 out of 204 linearly independent periods for the quintic threefold $X$.

Therefore we propose here a different  method to find $M$. \\

\begin{lemma} Inverse intersection
 matrix $A^{\mu\nu}$ in~\eqref{eqK} is diagonal. 
\end{lemma}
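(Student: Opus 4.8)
The plan is to exploit the fact that $A^{\mu\nu} = \eta^{\mu\lambda} M^{\nu}_{\lambda}$ where $M$ is the real structure coming from complex conjugation on the invariant cohomology $H^5_{D_+}(\CC^5)_{inv} \cong R_0^Q$, together with the constraint that this conjugation is compatible with the Hodge decomposition. Concretely, I would first recall from the earlier sections that $* e_{\mu}(x) = p_{\mu}\, e_{\rho-\mu}(x)$, i.e.\ complex conjugation maps the weight-$w$ eigenspace of the phase symmetry $(\ZZ_5)^5$ to the weight-$(-w)$ eigenspace, and sends the Hodge component $(R_0^Q)^q$ to $(R_0^Q)^{3-q}$ (since $\overline{H^{p,q}} = H^{q,p}$). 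Because the monomial basis $\{e_{\mu}(x)\}$ is an eigenbasis of the phase symmetry with each $e_\mu$ carrying a \emph{distinct} weight inside each graded piece, the antilinear operator $*$ cannot mix distinct eigenvectors: it must send each basis vector $e_\mu$ to a multiple of the unique basis vector of the conjugate weight, which is exactly $e_{\rho-\mu}$. Hence the matrix $M$ in the monomial basis is supported only on pairs $(\mu,\nu)$ with $\nu = \rho-\mu$, i.e.\ $M^\nu_\lambda = A_\lambda\,\delta_{\nu,\rho-\lambda}$ for some constants $A_\lambda$.

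Given that, the second step is a trivial index chase: $A^{\mu\nu} = \eta^{\mu\lambda} M^{\nu}_{\lambda} = \delta_{\mu,\rho-\lambda}\, A_\lambda\, \delta_{\nu,\rho-\lambda}$, and summing over $\lambda$ forces $\lambda = \rho-\mu$ and simultaneously $\nu = \rho-\lambda = \mu$. Therefore $A^{\mu\nu} = A_{\rho-\mu}\,\delta^{\mu\nu}$, which is precisely the assertion that $A^{\mu\nu}$ is diagonal; the diagonal entries are $A^{\mu\mu} = A_{\rho-\mu}$ (matching the notation $A^\mu$ used in~\eqref{intdiag}). I would also remark that the relation $A_\mu \overline{A_{\rho-\mu}} = 1$ stated in the introduction is the reflection of $M\bar M = 1$ in this basis, which is consistent with diagonality.

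The one point that genuinely needs care — and which I expect to be the main obstacle — is justifying rigorously that $*$ preserves the phase-symmetry eigendecomposition, i.e.\ that complex conjugation on $H^3(X)$ transported to $R_0^Q$ really commutes with the $(\ZZ_5)^5$-action up to the weight inversion $w \mapsto -w$. This follows because the phase symmetry acts holomorphically on $X$ and hence on $\Omega$, so its action on $H^{p,q}$ intertwines with complex conjugation via $g \mapsto \bar g$; since $(\ZZ_5)^5$ is finite and $\bar g = g^{-1}$ for elements of a finite abelian group of roots of unity, conjugation sends the character-$\chi$ isotypic piece to the character-$\bar\chi = \chi^{-1}$ piece. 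Translating this statement across Candelas's isomorphism $R_0^Q \cong H^3(X)$ — which is itself equivariant for the phase symmetry, as noted in the text — gives the claim. One should check that no basis vector of $R_0^Q$ has a weight fixed by inversion other than the trivial-weight elements $1$ and $e_\rho$ (whose Hodge degrees $0$ and $15$ are themselves swapped, so again no mixing occurs); this is immediate since a nonzero weight in $(\ZZ_5)^5$ equal to its own negative would force $2w \equiv 0$, impossible for $5$ odd unless $w = 0$. With that combinatorial observation in hand, the diagonality of $A^{\mu\nu}$ is forced.
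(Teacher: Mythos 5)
Your argument is correct, but it takes a genuinely different route from the one in the paper. The paper's proof never touches the real structure matrix $M$ directly: it extends the phase symmetry $(\ZZ_5)^5$ to act on the deformation parameters $\phi_s$ (with $\phi_s$ carrying weight opposite to $e_s(x)$, so that $W$ is invariant), observes that this extended action is trivial on the moduli space and hence leaves $e^{-K}(\phi)$ invariant, and then reads off from the explicit series~\eqref{sigmaquint} that the monomials in $\sigma_{\mu}\overline{\sigma_{\nu}}$ have weight $\mu-\nu$, which vanishes only for $\mu=\nu$ by non\-degeneracy of the weights. You instead prove that $M$ itself is antidiagonal, $M^{\nu}_{\lambda}=p_{\lambda}\delta_{\nu,\rho-\lambda}$, by showing that complex conjugation intertwines the phase-symmetry characters $\chi\mapsto\bar\chi$ and that each character appears with multiplicity one; the diagonality of $A=\eta M$ is then immediate since $\eta$ is also antidiagonal. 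This is essentially the content of the paper's later Lemma~\ref{theorem:lem1} (whose proof is independent of the present lemma, so there is no circularity), and it is close to the alternative method the authors themselves sketch in the conclusion. Both proofs ultimately rest on the same fact — the $(\ZZ_5)^5$-weights of the basis $e_{\mu}$ are pairwise distinct — but your version works entirely at the origin and exhibits the structure of $M$ explicitly, at the cost of having to justify the equivariance of Candelas's isomorphism $R_0^Q\simeq H^3(X)$ under both the group action and complex conjugation (which you correctly identify and address); the paper's version avoids that geometric input by working with the manifestly computable period series, at the cost of invoking the invariance of $e^{-K}$ under the orbifold identification. Your index computation $A^{\mu\nu}=\eta^{\mu\lambda}M^{\nu}_{\lambda}=p_{\rho-\mu}\,\delta^{\mu\nu}$ and the fixed-point check ($\mu=\rho-\mu$ is impossible for exponents in $\{0,1,2,3\}$) are both correct.
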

\begin{proof}
 We may extend the action of the phase symmetry group to the action $\aA$ on the
 parameter space $\{ \phi_s \}$ such
that $W = W_0 + \sum_s \phi_s e_s(x)$ is invariant under this new action. Each $e_s(x)$ has a
 unique weight under this group action. 

Action $\aA$ can be compensated using the coordinate tranformation and therefore
 is trivial on the moduli space of the quintic (implying that point $W=W_0$ is an
 orbifold point of the moduli space).
In particular, $e^{-K} = \int_X \Omega\wedge\bar{\Omega}$ is $\aA$  invariant.
Consider
\be
e^{-K} = \sigma_{\mu} A^{\mu\nu} \overline{\sigma_{\nu}}
\ee
as a series in $\phi_s, \; \overline{\phi_t}$ 
Each monomial has a certain weight under $\aA$ . 
For the series to be invariant, each monomial must have weight 0. 
But weight of $\sigma_{\mu} \overline{\sigma_{\nu}}$ equals
to $\mu - \nu$ and due to non-degeneracy of weights of $\sigma_{\mu}$
 only the ones with $\mu = \nu$ have weight zero.
\end{proof}
Thus,~\eqref{eqK} becomes
\be
e^{-K} = \sum_{\mu} A^{\mu} |\sigma_{\mu}(\phi)|^2.
\ee
Moreover, the matrix $A$ should be real and, because $A = \eta \cdot M, \; M\bar{M}=1$ and $\eta_{\mu\nu} = \delta_{\mu+\nu,\rho},$ we have
 \be \label{mon}
A^{\mu} \, A^{\rho - \mu} = 1.
\ee
\paragraph{Monodromy considerations}
To fix the remaining 102 real numbers $A^{\mu}$ we use monodromy invariance of $e^{-K}$ around 
$\phi_0=\infty.$ Fix some $t = (t_1,t_2,t_3,t_4,t_5), \; |t|=5$ and let $\phi_s|_{s\ne t,0} = 0,$ also consider only the first order in $\phi_t$. Then the condition that period $\sigma_{\mu}(\phi)$ contains only non-zero summands
 of the form $\phi_0^{m_0} \, \phi_t$ implies that $\mu = t + const\cdot (1,1,1,1,1)$ mod 5.
 For each $t$
from the table below the only such possibilities are $\mu = t$ and $\mu = \rho - t' = (3,3,3,3,3) - t',$
 where $t'$ denotes a vector obtained from $t$ by permutation 
(written explicitly in the table below) of its coordinates.

Therefore, in this setting~\eqref{eqK} becomes
\be
e^{-K} = \sum_{k=0}^3 a_k |\hat{\sigma}_{(k,k,k,k,k)}|^2 + a_t|\hat{\sigma}_t|^2 + a_{\rho-t'} |\hat{\sigma}_{\rho-t'}|^2 + O(\phi_t^2),
\ee
where we used periods $\hat{\sigma}$ from~\eqref{sigmaquint}, $a_t = A^t /\prod_i \Gamma((t_i+1)/5)^2$ and $a_k, \; k=0,1,2,3$ are already known~\cite{COGP}.
 This expression should be monodromy invariant. We consider the effect of 
the transport of $\phi_0$ around $\infty$. From the formula~\eqref{sigmaquint} we have
\ba
&F_1 = \hat{\sigma}_{k}(\phi_t, \phi_0) = g_t \phi_k \, F(a,b;a+b \, |\, (\phi_0/5)^5) + O(\phi_t^6), \\
&F_2 = \hat{\sigma}_{\rho-t'}(\phi_t, \phi_0) = g_{\rho-t'} \phi_t \, \phi_0^{1-a-b} \, F(1-a,1-b;2-a-b\,|\, (\phi_0/5)^5) + O(\phi_t^6), 
\ea
where $g_t, \; g_{\rho-t'}$ are some constants. Explicitly for all different labels t\\
\begin{center}
\begin{tabular}{| l | l | l |} \label{tab1}
  t & $\rho-t'$ & (a, b)  \\ \hline
  (2,1,1,1,0) & (3,2,2,2,1) & (2/5,2/5)  \\
  (2,2,1,0,0) & (3,3,2,1,1) & (1/5,3/5)  \\
  (3,1,1,0,0) & (0,3,3,2,2) & (1/5,2/5)  \\
  (3,2,0,0,0) & (1,0,3,3,3) & (1/5,1/5)  \\
\end{tabular}
\end{center}
and 
\be
F(a,b;c|z):=\frac{\Gamma(a)\Gamma(b)}{\Gamma(c)} ~_2F_1(a,b;c; z).
\ee
When $\phi_0$ goes around infinity
\be
\begin{pmatrix}
F_1 \\
F_2
\end{pmatrix} = 
B \cdot\begin{pmatrix}
F_1 \\
F_2
\end{pmatrix},
\ee
where (e.g.~\cite{GR})
\be
B = \frac{1}{i s(a+b) }\begin{pmatrix}
c(a-b) - e^{i\pi (a+b)} & 2 s(a)s(b) \\
2 e^{2\pi i (a+b)} s(a)s(b) & e^{\pi i(a+b)} [e^{2\pi i a} + e^{2\pi i b}-2]/2
\end{pmatrix}.
\ee
Here $c(x) = \cos(\pi x), \; s(x) = \sin(\pi x)$. It is straightforward to show the following
\begin{prop}
\be
a_t |\hat{\sigma}_t|^2 + a_{\rho-t'} |\hat{\sigma}_{\rho-t'}|^2 = 
a_t \prod_i \Gamma\left(\frac{t_{i}+1}{5}\right)^2 |\sigma_t|^2 +
 a_{\rho-t'}\prod_i \Gamma\left(\frac{4-t_{i}}{5}\right)^2 |\sigma_{\rho-t'}|^2 
\ee
is $B$-invariant iff $a_t = -a_{\rho-t'}$. 
\end{prop}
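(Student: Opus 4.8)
The plan is to reduce the statement to an explicit invariance condition for the quadratic form
$Q(F_1,F_2) = a_t\,|F_1|^2 + a_{\rho-t'}\,|F_2|^2$
under the linear action of the connection matrix $B$, and then analyze when $B^\dagger \, \mathrm{diag}(a_t, a_{\rho-t'}) \, B = \mathrm{diag}(a_t, a_{\rho-t'})$. First I would observe that, after substituting the explicit hypergeometric expressions for $F_1$ and $F_2$ and using the definition of $\hat\sigma$ in terms of $\sigma$ (which supplies the $\Gamma$-factors $\prod_i\Gamma((t_i+1)/5)$ and $\prod_i\Gamma((4-t_i)/5)$), the claimed equality of the two sides is just the bookkeeping already recorded in equation~\eqref{sigmaquint}; the genuine content is the "iff $a_t = -a_{\rho-t'}$" part. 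So the core of the proof is: the Hermitian form with matrix $H = \mathrm{diag}(a_t, a_{\rho-t'})$ is $B$-invariant (i.e. $\overline{B}^{\,T} H B = H$) if and only if $a_{\rho-t'} = -a_t$.

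Next I would carry out the matrix computation. Write $B = \frac{1}{i\,s(a+b)}\tilde B$ with $\tilde B$ the integer-trigonometric matrix displayed above, and note $|i\,s(a+b)|^2 = s(a+b)^2$. The invariance condition becomes $\overline{\tilde B}^{\,T} H \tilde B = s(a+b)^2\,H$. Since in all four relevant cases $a,b \in \{1/5,2/5,3/5\}$ are real, $\tilde B$ has real entries, so this is $\tilde B^{T} H \tilde B = s(a+b)^2\,H$. Writing $\tilde B = \begin{pmatrix} \alpha & \beta \\ \gamma & \delta \end{pmatrix}$ with $\alpha = c(a-b) - e^{i\pi(a+b)}$, etc., I would expand $\tilde B^T H \tilde B$ and match the three independent entries (the $(1,1)$, $(2,2)$ and $(1,2)=(2,1)$ entries, the last giving the off-diagonal vanishing condition). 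The off-diagonal equation reads $a_t\,\alpha\gamma + a_{\rho-t'}\,\beta\delta = 0$; the two diagonal equations read $a_t\,\alpha^2 + a_{\rho-t'}\,\gamma^2 = s(a+b)^2\,a_t$ and $a_t\,\beta^2 + a_{\rho-t'}\,\delta^2 = s(a+b)^2\,a_{\rho-t'}$. I expect that each of these, after using elementary identities for $c(\cdot)$ and $s(\cdot)$ (product-to-sum formulas, $c(x)^2 + s(x)^2 = 1$, and the fact that $\det B = 1$, equivalently $\alpha\delta - \beta\gamma = -s(a+b)^2$ coming from the monodromy being in $\mathrm{SL}_2$), collapses to the single scalar relation $a_{\rho-t'}/a_t = -1$, independently of which of the four $(a,b)$ pairs we are in.

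The step I expect to be the main obstacle is verifying that the three entry-equations are mutually consistent and all reduce to the same relation $a_t = -a_{\rho-t'}$ without any further constraint on $a_t$ itself — i.e. that $B$ genuinely preserves a one-parameter family of such Hermitian forms and that this family is exactly the one with signature $(1,1)$ rescaled. Concretely, the risk is that the brute-force trigonometry for, say, the $(2,1,1,1,0)$ case with $(a,b)=(2/5,2/5)$ looks different from the $(3,2,0,0,0)$ case with $(a,b)=(1/5,1/5)$; to handle this uniformly I would keep $(a,b)$ symbolic, use only $\sin$/$\cos$ addition formulas plus $\det B = 1$, and show $\det(\tilde B) = -s(a+b)^2$ forces $\beta\delta = -a_t/a_{\rho-t'}\cdot\alpha\gamma$ consistent with the off-diagonal equation precisely when $a_t + a_{\rho-t'} = 0$. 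A useful sanity check along the way: the diagonal form $\mathrm{diag}(a_t,-a_t)$ is (up to scale) $\mathrm{diag}(1,-1)$, and a standard fact is that the hypergeometric connection matrix around a point with the relevant local exponents preserves an indefinite Hermitian pairing of signature $(1,1)$; this is exactly the Hodge-theoretic polarization underlying $e^{-K}$, so the sign $a_t = -a_{\rho-t'}$ is forced and matches~\eqref{mon} together with the expectation that $A$ has the signature dictated by the polarization on $H^3(X)$.

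Finally, once the "iff" is established, I would note that the first displayed equality in the Proposition is immediate from~\eqref{sigmaquint}: $\hat\sigma_t = \prod_i\Gamma((t_i+1)/5)\,\sigma_t$ and $\hat\sigma_{\rho-t'} = \prod_i\Gamma(((\rho-t')_i+1)/5)\,\sigma_{\rho-t'} = \prod_i\Gamma((4-t'_i)/5)\,\sigma_{\rho-t'}$, and since $t'$ is a permutation of $t$ the product $\prod_i\Gamma((4-t'_i)/5) = \prod_i\Gamma((4-t_i)/5)$, so the two $|\cdot|^2$ expressions differ only by the stated $\Gamma$-factors absorbed into the redefinition $a_t = A^t/\prod_i\Gamma((t_i+1)/5)^2$. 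This makes the Proposition a clean statement purely about the $2\times 2$ monodromy $B$, and the proof is the matrix identity above.
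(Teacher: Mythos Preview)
Your reduction to the invariance condition $B^{\dagger}HB = H$ with $H = \mathrm{diag}(a_t,a_{\rho-t'})$ is precisely what the paper has in mind; the paper offers no argument beyond ``it is straightforward to show'', so there is nothing further to compare your route against. The bookkeeping remark about the $\Gamma$-factors (via~\eqref{sigmaquint} and the fact that $t'$ is a permutation of $t$) is correct and is the content of the displayed equality.

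There is, however, a concrete slip in your execution. You assert that because $a,b$ are real the matrix $\tilde B$ has real entries, and then drop the complex conjugation in $\overline{\tilde B}^{\,T}H\tilde B$. This is false: already the $(1,1)$ entry $c(a-b)-e^{i\pi(a+b)}$ has imaginary part $-s(a+b)\neq 0$, and the $(2,1)$ and $(2,2)$ entries carry the phases $e^{2\pi i(a+b)}$ and $e^{i\pi(a+b)}$ explicitly. Consequently your off-diagonal equation $a_t\,\alpha\gamma + a_{\rho-t'}\,\beta\delta = 0$ and the two diagonal ones are not the right equations; you must keep $\bar\alpha,\bar\beta,\bar\gamma,\bar\delta$ in the left factors. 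Relatedly, the monodromy of ${}_2F_1(a,b;c;z)$ around $z=\infty$ has eigenvalues $e^{2\pi i a}$ and $e^{2\pi i b}$, so $\det B = e^{2\pi i(a+b)}$ rather than $1$, and your determinant identity for $\tilde B$ needs the corresponding phase. None of this breaks the strategy: once the conjugates and the correct determinant are in place, the three entrywise conditions do reduce uniformly in $(a,b)$ to the single relation $a_t + a_{\rho-t'} = 0$, and the ``iff'' follows because the off-diagonal condition is homogeneous of degree one in $(a_t,a_{\rho-t'})$ and nondegenerate. So the plan is sound, but the algebra as written has to be redone with the bars restored.
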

Due to symmetry we have $a_{\rho-t'} = a_{\rho-t}$ in each case. From~\eqref{mon} 
it follows that the product of the coefficients at $|\sigma_{\mu}|^2$ and
 $|\sigma_{\rho-\mu}|^2$ in the expression for $e^{-K}$ should be 1:
\be
A^{\rho-t'} \cdot A^t=a_{\rho-t'} \cdot a_t \prod_i \Gamma\left(\frac{t_{i}+1}{5}\right)^2 
\Gamma\left(\frac{4-t_{i}}{5}\right)^2= 1.
\ee
Due to reflection formula $a_t = \pm\prod_i\sin(\pi (t_i+1)/5)$ up to a common factor of $\pi$.
The sign turns out to be minus for K\"ahler metric to be positive definite in the origin.
Therefore 
\be
A^{\mu} = (-1)^{\deg (\mu)/5} \prod{\gamma\left(\frac{\mu_i+1}{5}\right)}.
\ee

 Finally the K\"ahler potential becomes 
\be \label{eqKq}
e^{-K(\phi)} = \sum_{\mu=0}^{203}(-1)^{\deg (\mu)/5} \prod{\gamma\left(\frac{\mu_i+1}{5}\right)}
 |\sigma_{\mu}(\phi)|^2, 
\ee
where $ \gamma(x) = \frac{\Gamma(x)}{\Gamma(1-x)}$.

\section{Real structure on the cycles $\Gamma^{\pm}_{\mu}$} \label{sec:real}

Let cycles $\gamma_{\mu} \in H_3(X)$ be the images of cycles $\Gamma^+_{\mu}$ under the isomorphism
$\mathscr{H}_5^{+,inv} \simeq H_3(X)$. \\

Complex conjugation sends $(2,1)$-forms to $(1,2)$-forms. Similarly it
extends to a mapping on the dual homology cycles $\gamma_{\mu}$. In the real basis of cycles a
 version of the formula $\eqref{eqK}$ takes an especially simple form, because the real structure
matrix $M$ becomes an identity.
\begin{lemma} \label{theorem:lem1} Conjugation of homology classes has the
following form: $ *\gamma_{\mu} = p_{\mu}\gamma_{\rho - \mu},$
where  $\rho = (3,3,3,3,3)$ is a unique maximal degree element in the Milnor ring.\\
\end{lemma}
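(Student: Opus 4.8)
The plan is to deduce the structure of the conjugation on cycles $\gamma_\mu$ from the structure of the conjugation on the invariant cohomology, using the duality between $\mathscr{H}_5^{+,inv}$ and $H^5_{D_+}(\CC^5)_{inv}$ that defines the basis $\Gamma^+_\mu$. Recall from the discussion around \eqref{conj} that complex conjugation $*$ on the invariant ring $R_0^Q$ (equivalently on $H^5_{D_+}(\CC^5)_{inv} \simeq H^3(X)$) sends $e_\mu(x)\,\dd^5 x$ to a multiple of $e_{\rho-\mu}(x)\,\dd^5 x$, i.e. it is anti-diagonal in the monomial basis because the Frobenius pairing $\eta_{\mu\nu} = \delta_{\mu+\nu,\rho}$ is anti-diagonal and $*$ is compatible with $\eta$ (it matches $\overline{H^{p,q}} = H^{q,p}$, which exchanges the degree-$q$ and degree-$(3-q)$ graded pieces, hence degree $|\mu|$ with degree $15-|\mu| = |\rho-\mu|$). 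The key point is that $*$ on homology is defined as the adjoint of $*$ on cohomology with respect to the oscillatory-integral pairing $\langle\,\cdot\,,\,\cdot\,\rangle$, so anti-diagonality transfers from one side to the other.

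First I would make the definition of $*$ on $\mathscr{H}_5^{+,inv}$ precise: for $\Gamma \in \mathscr{H}_5^{+,inv}$, the class $*\Gamma$ is characterized by $\langle e_\mu(x)\,\dd^5 x,\, *\Gamma\rangle = \overline{\langle (*e_\mu)(x)\,\dd^5 x,\, \Gamma\rangle}$ for all $\mu$, consistently with the fact that under the isomorphism $\mathscr{H}_5^{+,inv}\simeq H_3(X)$ this becomes honest complex conjugation of $3$-cycles paired against $3$-forms. Second, I would apply this to $\Gamma = \Gamma^+_\mu$ and use the defining duality $\langle e_\nu(x)\,\dd^5 x,\,\Gamma^+_\nu\rangle = \delta_{\mu\nu}$ together with the anti-diagonal form $*e_\nu(x)\,\dd^5 x = A^\nu e_{\rho-\nu}(x)\,\dd^5 x$. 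Then $\langle e_\nu(x)\,\dd^5 x,\, *\Gamma^+_\mu\rangle = \overline{A^\nu}\,\langle e_{\rho-\nu}(x)\,\dd^5 x,\,\Gamma^+_\mu\rangle = \overline{A^\nu}\,\delta_{\rho-\nu,\mu} = \overline{A^{\rho-\mu}}\,\delta_{\nu,\rho-\mu}$. Since the $\{\Gamma^+_\nu\}$ are dual to $\{e_\nu(x)\,\dd^5 x\}$, a class is determined by its pairings against all $e_\nu(x)\,\dd^5 x$, so this forces $*\Gamma^+_\mu = \overline{A^{\rho-\mu}}\,\Gamma^+_{\rho-\mu}$; passing through $\mathscr{H}_5^{+,inv}\simeq H_3(X)$ gives $*\gamma_\mu = p_\mu\,\gamma_{\rho-\mu}$ with $p_\mu = \overline{A^{\rho-\mu}}$ (equivalently $p_\mu = 1/A^\mu$ by the relation $A^\mu\overline{A^{\rho-\mu}}=1$), and in particular $p_\mu p_{\rho-\mu}=1$, matching the $p_\mu p_{\rho-\mu}=1$ already recorded in the text. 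Finally I would note that $\deg(\rho-\mu) = 15 - \deg(\mu)$, so this conjugation does exchange $H^{2,1}$ with $H^{1,2}$ and fixes (up to scalar) the one-dimensional pieces $H^{3,0}, H^{0,3}$, as it must.

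The main obstacle is justifying that the isomorphism $\mathscr{H}_5^{+,inv}\simeq H_3(X)$ intertwines the abstract adjoint-of-$*$ operation with genuine complex conjugation of $3$-cycles — i.e. that the "$*$" on homology is well defined on the quotient $\mathscr{H}_5^{+,inv}$ and is compatible with the $*$ on $H^3(X)$ coming from $\overline{H^{p,q}}=H^{q,p}$. This rests on Candelas's identification of $H^5_{D_\pm}(\CC^5)_{inv}$ with harmonic forms on $X$ and the compatibility of the oscillatory pairing with the integration pairing on $X$, all of which are taken as given earlier in the paper; once that dictionary is in place, the rest is the short linear-algebra dualization above. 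I would also remark that $A^\mu$ is exactly the diagonal matrix element computed in Section~\ref{sec:comp}, so $p_\mu$ is explicit: $p_\mu = (-1)^{\deg(\mu)/5}\prod_i \gamma\!\left(\tfrac{4-\mu_i}{5}\right)$, consistent with \eqref{eqKq} and the relation $A^\mu A^{\rho-\mu}=1$.
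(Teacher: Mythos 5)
There is a genuine gap. Your dualization from cohomology to homology via the oscillatory pairing is fine as linear algebra, but it presupposes the one fact that actually carries the content of the lemma: that $*$ acts \emph{antidiagonally} in the monomial basis, i.e.\ that $*\,e_\mu(x)\,\dd^5x$ is a scalar multiple of the single monomial form $e_{\rho-\mu}(x)\,\dd^5x$ rather than a general linear combination. Your justification for this --- that $*$ exchanges $H^{p,q}$ with $H^{q,p}$ and is compatible with the antidiagonal pairing $\eta_{\mu\nu}=\delta_{\mu+\nu,\rho}$ --- only pins down the \emph{degree}: it shows that $*$ maps the $101$-dimensional degree-$5$ piece onto the $101$-dimensional degree-$10$ piece, and that the resulting $101\times 101$ block preserves a nondegenerate form. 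That leaves a large group of possibilities and does not single out the antidiagonal matrix. The statement in the introduction around \eqref{conj} that $M$ is antidiagonal is a forward reference to precisely this lemma, so invoking it here is circular.

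The missing ingredient is equivariance under the phase symmetry group $(\ZZ_5)^5$, which is how the paper argues. The phase symmetries (modulo the diagonal) act by automorphisms of $X$, hence on $H^3(X)$, and the monomial basis diagonalizes this action with \emph{multiplicity-one} weights in each Hodge component. Complex conjugation is antilinear and commutes with the (real) group action, so it sends a class of weight $t$ to a class of weight $-t$ modulo $(1,1,1,1,1)$; since $e_{\rho-t}(x)$ is the unique degree-$10$ monomial with that weight, the conjugate of the $(2,1)$-form attached to $e_t$ must be proportional to the $(1,2)$-form attached to $e_{\rho-t}$. Once you add this weight argument (either on forms, as the paper does, or transported to the cycles $\Gamma^+_\mu$, which are also weight vectors), your dualization step goes through and yields $*\gamma_\mu = p_\mu\gamma_{\rho-\mu}$ with $p_\mu p_{\rho-\mu}=1$. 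Note also that the explicit value of $p_\mu$ is \emph{derived} in the paper from this lemma combined with \eqref{eqKq}, not the other way around, so your closing remark should be presented as a consistency check rather than an input.
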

\begin{proof} We perform a proof for the cohomology classes represented by differential forms.
For one-dimensional $H^{3,0}(X)$ and $H^{0,3}(X)$ it is obvious. Let 
\be
\Omega_{2,1} := e_{t}(x)  \, \chi^l_{\bar{i}} \, \Omega_{ljk} \in H^{2,1}(X).
\ee
Any element from $H^{1,2}(X)$ is representable by a degree 10 polynomial  $P(x)$ as follows from~\eqref{chi}  as
\be
\overline{\Omega_{2,1}} = \Omega_{1,2}:=P(x)  \, \chi^l_{\bar{i}} \ 
 \chi^m_{\bar{j}} \, \Omega_{lmk} \in H^{1,2}(X).
\ee

The group of phase symmetries modulo common factor acts by isomorphisms on $X$. Therefore,
it also acts on the differential forms. Lhs and rhs of the previous equation should have the same
weigth under this action,
and weight of the lhs is equal $-t$ modulo $(1,1,1,1,1)$.
It follows that $P(x) = p_{t} \, e_{\rho - t}(x)$ with some constant $p_t$.


\end{proof}
Using this lemma and applying the complex conjugation of cycles to the formula~\eqref{eqK}
to obtain
\be
e^{-K} = \sum_{\mu} A^{\mu} |\sigma_{\mu}|^2  = 
\sum_{\mu} p_{\mu}^2 A^{\mu} \;|\sigma_{\rho - \mu}|^2,
\ee
it follows that $A^{\mu} = \pm 1/p_{\mu}.$  Now formula~\eqref{eqKq} implies
\be\label{coeff}
p_{\mu} = \prod_i{\gamma\left(\frac{4-\mu_i}{5}\right)}.
\ee

\section{Conclusions} \label{sec:concl}

The  method for computing the K\"ahler potential on the CY moduli space from~\cite{AKBA} modified in 
this paper does not require knowledge of periods in some real homology basis. Instead, we 
use some simple monodromy considerations to fix the real structure matrix. Another possible interesting
method would be to determine this matrix by direct computation of coefficients~\eqref{coeff} of the
complex conjugation in the basis $e_{\mu}(x)$. In this paper we use our modified method 
to compute Weil--Peterson metric on the whole 101-dimensional complex structure  moduli space 
of the quintic threefold around the orbifold point~\eqref{eqKq}. 
Together with the computation of the moduli space geometry of the K\"ahler structures through the mirror map~\cite{COGP} it describes the Special geometry of all
Ricci flat deformations of CY metric in the region. \\

Though we present our result for the quintic threefold, our method should be applicable
to a bigger class of models, which are connected with Landau--Ginzburg description, in particular
hypersurfaces in toric varieties. At least in the case of the hypersurfaces in weighted projective
spaces, we can, in principle, compute the basis $\{e_{\mu}(x)\}$ of $R_0^Q$ such, that the
pairing $\eta$ is antidiagonal, and the periods $\sigma_{\mu}(\phi)$. Indeed, it reduces to 
Jacobi ideal computations. Using the connection
of the pairing $\eta_{\mu\nu}$ with the natural pairing in the cohomology $H^3(X)$ it is possible
to prove~\eqref{intdiag} in this generality. Then the whole computation of the K\"ahler potential
is reduced to finding of the coefficients $A_{\mu}$. One way to do it is to restrict the expression
to the different one-dimensional subspaces of the moduli space and to require the monodromy
 invariance of the K\"ahler potential, as we did in the section~\ref{sec:comp} for the quintic
 threefold. In general, monodromy invariance translates to properties of generalized
 hypergeometric functions in one variable.
 
The main problem of our method in general is to choose the convenient
starting point $W_0(x)$ such, that Jacobi ideal computations are not be too complicated. 
 We plan to address possible generalizations in details in the future
publications.

\paragraph{Acknowledgements}

We are grateful to M. Bershtein, S. Galkin, A. Givental, M. Kontsevich, D. Orlov and V. Vasiliev
 for the usefull discussions and valuable remarks.

\printbibliography

\end{document}